\begin{document}
\sloppy \raggedbottom
\setcounter{page}{1}

\newpage
\setcounter{figure}{0}
\setcounter{equation}{0}
\setcounter{footnote}{0}
\setcounter{table}{0}
\setcounter{section}{0}

\def\beqas{\begingroup \setlength{\arraycolsep}{0.14em} \begin{eqnarray*}}
\def\eeqas{\end{eqnarray*} \endgroup \ignorespaces}

\def\inprod#1{\left\langle #1 \right\rangle}

\newenvironment{proof}{\noindent \textbf{Proof:}}{$\hfill \Box$\\}

\newtheorem{theorem}{Theorem}
\newtheorem{proposition}[theorem]{Proposition}
\newtheorem{lemma}[theorem]{Lemma}
\newtheorem{definition}[theorem]{Definition}

\begin{center}
{\Large
The Hamiltonian $H=xp$ and classification of $\mathfrak{osp}(1|2)$ representations}\\[5mm]
G.~Regniers and J.~Van der Jeugt\\
Department of Applied Mathematics and Computer Science, Ghent University, \\
Krijgslaan 281-S9, B-9000 Gent, Belgium.\\
Gilles.Regniers@UGent.be, Joris.VanderJeugt@UGent.be
\end{center}

\begin{abstract}

The quantization of the simple one-dimensional Hamiltonian $H = xp$ is of interest for its mathematical properties rather than for its physical relevance. In fact, the Berry-Keating conjecture speculates that a proper quantization of $H = xp$ could yield a relation with the Riemann hypothesis. Motivated by this, we study the so-called Wigner quantization of $H = xp$, which relates the problem to representations of the Lie superalgebra $\mathfrak{osp}(1|2)$. In order to know how the relevant operators act in representation spaces of $\mathfrak{osp}(1|2)$, we study all unitary, irreducible $\ast$-representations of this Lie superalgebra. Such a classification has already been made by J.\ W.\ B.\ Hughes, but we reexamine this classification using elementary arguments.

\end{abstract}

\section{Introduction}
The suggestion that the zeros of the Riemann zeta function might be related to the spectrum of a self-adjoint operator $H$ goes back to Hilbert and P\'olya in the early $20$th century. It was not until the works of Selberg~\cite{Selberg-1956} and Montgomery~\cite{Montgomery-1973} that this conjecture gained much credibility. Due to papers by Connes~\cite{Connes-1999} and Berry and Keating~\cite{Berry-1999a, Berry-1999b} in the late 1990s, it appears that the Hilbert-P\'olya conjecture might be related to the classical one-dimensional Hamiltonian $H=xp$. More precisely, Berry and Keating suggest that some sort of quantization of this Hamiltonian might result in a spectrum consisting of the values $t_n$, where the $t_n$ are the heights of the non-trivial Riemann zeros $\frac{1}{2} + it_n$. A proper quantization revealing such a correspondence is, however, not known. \\
These interesting observations stimulated us to perform a different quantization of the Hamiltonian $H=xp$. In Wigner quantization one abandons the canonical commutation relations and instead imposes compatibility between Hamilton's equations and the Heisenberg equations as operator equations. The result is a set of compatibility conditions that are weaker than the canonical commutation relations. This was applied for the first time in a famous paper by Wigner~\cite{Wigner}. \\
Wigner's approach has been applied to many different Hamiltonians, leading to various connections with Lie superalgebras~\cite{LSVdJ-06, LSVdJ-08-1, LVdJ-08}. In the present text, Wigner quantization will lead to the Lie superalgebra $\mathfrak{osp}(1|2)$. Since it is our interest to determine the spectrum of the operators $\hat{H}$ and $\hat{x}$, one needs the action of these operators in representation spaces of $\mathfrak{osp}(1|2)$. We present a classification of all irreducible $\ast$-representations of this Lie superalgebra, thus reconstructing and improving some results by Hughes~\cite{Hughes-1981}. 

\section{Wigner quantization of $H=xp$}
The simplest Hermitian operator that corresponds to our Hamiltonian is given by
\begin{equation}
  \hat{H} = \frac{1}{2} (\hat{x}\hat{p} + \hat{p}\hat{x}).
\end{equation}
Without the assumption of any commutation relations between the position and momentum operators $\hat{x}$ and $\hat{p}$, one can still compute Hamilton's equations
\[
  \dot{\hat{x}} = \frac{\partial \hat{H}}{\partial p} = \hat{x}, \qquad
  \dot{\hat{p}} = - \frac{\partial \hat{H}}{\partial x} = -\hat{p}
\]
and the equations of Heisenberg
\[
  \dot{\hat{x}} = \frac{i}{\hbar} [\hat{H},\hat{x}], \qquad
  \dot{\hat{p}} = \frac{i}{\hbar} [\hat{H},\hat{p}]
\]
and impose that they are equivalent. The resulting compatibility conditions (we choose $\hbar=1$)
\begin{equation} \label{CC_xp}
  [\{\hat{x}, \hat{p}\},\hat{x}] = -2i\hat{x}, \qquad [\{\hat{x}, \hat{p}\},\hat{p}] = 2i\hat{p}
\end{equation}
are weaker than the usual canonical commutation relations $[\hat{x},\hat{p}] = i$. We wish to find self-adjoint operators $\hat{x}$ and $\hat{p}$ such that the compatibility conditions \eqref{CC_xp} are satisfied. For that purpose we define new operators $b^+$ and $b^-$, satisfying $(b^\pm)^\dagger = b^\mp$, as
\[
  b^\pm = \frac{\hat{x} \mp i\hat{p}}{\sqrt{2}}.
\]
One can rewrite the Hamiltonian $\hat{H}$ in terms of the $b^\pm$ as follows:
\[
  \hat{H} = \frac{i}{2} ((b^+)^2 - (b^-)^2).
\]
Evidently the operators $\hat{x}$ and $\hat{p}$ can be expressed as linear combinations of the $b^\pm$. Even the compatibility conditions can be reformulated. They are equivalent to $[\hat{H}, b^\pm] = -i b^\mp$, which in turn can be written as
\begin{equation} \label{osp_def_rel}
  \left[ \{b^-, b^+\}, b^\pm \right] = \pm 2b^\pm.
\end{equation}
These equations are recognized to be the defining relations of the Lie superalgebra $\mathfrak{osp}(1|2)$, generated by the elements $b^+$ and $b^-$. So we have found expressions of all relevant operators in terms of Lie superalgebra generators. \\
A question one might ask is to find the spectrum of $\hat{H}$ and $\hat{x}$ in an $\mathfrak{osp}(1|2)$ representation space, which is only possible once these representation spaces are known. The spectral problem will be tackled in a subsequent paper. Right now, we wish to present a straightforward way of classifying the irreducible $\ast$-representations of $\mathfrak{osp}(1|2)$.

\section{Classification of irreducible $\ast$-representations of $\mathfrak{osp}(1|2)$}
Although we are aware of the classification by Hughes in~\cite{Hughes-1981}, we think it is possible to achieve his results in a more accessible way, based on~\cite{Groenevelt-2004}. In addition we will be able to identify some equivalent representation classes. Before giving the details of our classification, we provide the readers with the necessary definitions and a general outline of how we will construct all irreducible $\ast$-representations of $\mathfrak{osp}(1|2)$.

\subsection{Basic introduction and outline}
We will be dealing with the Lie superalgebra $\mathfrak{osp}(1|2)$, generated by two operators $b^+$ and $b^-$ that are subject to the relations \eqref{osp_def_rel}. The generating operators $b^+$ and $b^-$ are the odd elements of the algebra, while the even elements are 
\[
  h = \frac{1}{2} \{ b^-, b^+ \}, \qquad
  e = \frac{1}{4} \{ b^+, b^+ \}, \qquad
  f = -\frac{1}{4} \{ b^-, b^- \}.
\]
Among others, the following commutation relations can now be computed from the defining relations \eqref{osp_def_rel}:
\[
  [h,e]=2e, \qquad [h,f]=-2f, \qquad [e,f]=h.
\]
One can define a $\ast$-structure on $\mathfrak{osp}(1|2)$, which is an anti-linear anti-multiplicative involution $X \mapsto X^\ast$. For $X, Y \in \mathfrak{osp}(1|2)$ and $a, b \in \mathbb{C}$ we have that $(aX+bY)^\ast = \bar{a}X^\ast + \bar{b}Y^\ast$ and $(XY)^\ast = Y^\ast X^\ast$. Our $\ast$-structure is provided by the dagger operation $X \mapsto X^\dagger$, so we have $\bigl( b^\pm \bigr)^\ast = b^\mp$ and therefore $h^\ast=h, \, e^\ast=-f$ and $f^\ast=-e$. Once we have constructed such a $\ast$-algebra, we need to define representations.
%
\begin{definition}
  Let $\mathcal{A}$ be a $\ast$-algebra, let $\mathcal{H}$ be a Hilbert space and let $\mathcal{D}$ be a dense subspace of $\mathcal{H}$. A $\ast$-representation of $\mathcal{A}$ on $\mathcal{D}$ is a map $\pi$ from $\mathcal{A}$ into the linear operators on $\mathcal{D}$ such that $\pi$ is a representation of $\mathcal{A}$ regarded as a normal algebra, together with the condition
  \begin{equation} \label{star_inprod}
    \inprod{\pi(X)v, w} = \inprod{v, \pi(X^\ast)w}
  \end{equation}
for all $X \in \mathcal{A}$ and $v, w \in \mathcal{D}$. The representation space $\mathcal{D}$, together with the representation $\pi$, is called an $\mathcal{A}$-module. A submodule of $\mathcal{D}$ is a subspace that is closed under the action of $\mathcal{A}$. The representation $\pi$ is said to be irreducible if the $\mathcal{A}$-module $\mathcal{D}$ has no non-trivial submodules.
\end{definition}
The even operators $h$, $e$ and $f$, together with the previously defined $\ast$-structure, form the Lie algebra $\mathfrak{su}(1,1)$. Both $\mathfrak{su}(1,1)$ and $\mathfrak{osp}(1|2)$ possess a Casimir operator, denoted by $\Omega$ and $C$ respectively:
\[
  \Omega = -\frac{1}{4} (4fe + h^2 + 2h), \qquad
  C = -4 \Omega + \frac{1}{2} (b^-b^+ - b^+b^-).
\]
The Casimir elements generate the center of the respective (enveloping) algebras. So $\Omega$ commutes with every element of $\mathfrak{su}(1,1)$ and similary for $C$. Moreover, we have $\Omega^* = \Omega$ and $C^* = C$.

We will construct all possible irreducible $\ast$-representations of $\mathfrak{osp}(1|2)$ starting from one assumption: $h$ has at least one eigenvector in the representation space with eigenvalue $2 \mu$, or
\begin{equation} \label{hv0_2muv0}
  \pi(h) v_0 = 2 \mu \, v_0.
\end{equation}
Starting from this one vector, we will build other basis vectors of the representation space $V$ by letting operators of $\mathfrak{osp}(1|2)$ act on it. After having determined the actions of all $\mathfrak{osp}(1|2)$ operators on all basis vectors of $V$, we will extend the representation $\pi$ to a $\ast$-representation. This is done by defining a sesquilinear form $\inprod{.,.}: V \to \mathbb{C}$, which is to be an inner product that satisfies \eqref{star_inprod}. \\
The stipulation that $\inprod{.,.}$ should be an inner product will be crucial in limiting the possible representation spaces. However, we will postpone the details of this discussion to the point where we have enough arguments for this end. So let us start with the actual construction of the representation space $V$. 

\subsection{Construction of the representation space}
In this section, the $\ast$-structure is of no importance. We will construct an ordinary $\mathfrak{osp}(1|2)$ representation space that we will extend to a $\ast$-representation in the next section. \\
The embedding of $\mathfrak{su}(1,1)$ in $\mathfrak{osp}(1|2)$ implies that any irreducible representation of $\mathfrak{osp}(1|2)$ is a representation of $\mathfrak{su}(1,1)$, the latter being not necessarily irreducible. $V$ can therefore be written as a direct sum of irreducible representation spaces of $\mathfrak{su}(1,1)$, or
\[
  V = \bigoplus_i W_i.
\]
Without loss of generality, we can regard $v_0$ as an element of $W_0$. Since $W_0$ is a representation space of $\mathfrak{su}(1,1)$, we know that
\[
  v_{2k} = \pi(e)^k v_0 \quad \mbox{ and } \quad v_{-2k} = \pi(f)^k v_0
\]
must be elements of $W_0$. All these vectors span the space $W_0$, which is generated by a single vector $v_0$. \\
The action of $b^+$ on any vector of $W_0$ must be a vector outside $W_0$, provided that this action differs from zero. Let us define
\[
  v_1 = \pi(b^+) v_0.
\]
We can say that $v_1$ is an element of $W_1$. Similarly, we can look at the action of $b^-$ on $v_0$:
\[
  v_{-1} = \pi(b^-) v_0.
\]
Since $b^-b^+$ is a diagonal operator (apparent from the definition of the Casimir operator $C$), $\pi(b^-) v_1$ is a certain multiple of $v_0$. At this point however, we cannot be sure that $\pi(b^-) v_1$ is different from zero. Likewise, it is impossible to tell whether $\pi(b^+) v_{-1} \neq 0$. Since we can neither say that $\pi(f) v_1$ is a nonzero multiple of $v_{-1}$, nor that $\pi(e) v_{-1}$ is a multiple of $v_1$, we must regard $v_{-1}$ as an element of a different subspace $W_{-1}$. Note that $W_1$ and $W_{-1}$ are the same spaces when either $\pi(b^-) v_1$ or $\pi(b^+) v_{-1}$ differs from zero. These actions are zero simultaneously only when $\mu=0$. \\
We denote the generating vectors of $W_{-1}$ as $v_{-2k-1} = \pi(f)^k v_{-1}$ and the generating vectors of $W_1$ as $v_{2k+1} = \pi(e)^k v_1$.

\begin{lemma}
The vectors of $W_0$, $W_{-1}$ and $W_1$ are connected by the actions of $b^+$ and $b^-$ in the following manner
\begin{equation}
  v_{2k+1} = \pi(b^+) v_{2k} \qquad \mbox{and} \qquad v_{-2k-1} = \pi(b^-) v_{-2k},
\end{equation}
for every positive integer value of $k$.
\end{lemma}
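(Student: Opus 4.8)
The plan is to prove both identities by induction on $k$, exploiting the fact that $b^+$ and $b^-$ intertwine the $\mathfrak{su}(1,1)$ actions up to scalars. First I would establish the base case $k=0$ separately from the inductive step. For $k=0$ the first claim reads $v_1 = \pi(b^+)v_0$, which is the very definition of $v_1$, and similarly $v_{-1} = \pi(b^-)v_0$ is the definition of $v_{-1}$; wait, but the lemma is stated for positive integer values of $k$, so the genuine content begins at $k=1$, i.e.\ $v_3 = \pi(b^+)v_2$ and $v_{-3} = \pi(b^-)v_{-2}$. Still, the $k=0$ relations serve as the anchor for the induction.

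The key computational input is a set of commutator (really, (anti)commutator) relations between the odd generators $b^\pm$ and the even generators $e$, $f$ obtained from \eqref{osp_def_rel} and the definitions $e = \tfrac14\{b^+,b^+\}$, $f = -\tfrac14\{b^-,b^-\}$. Concretely I would derive that $[e, b^+] = 0$ (since $e$ is essentially $(b^+)^2$ and odd elements of a Lie superalgebra satisfy $[ (b^+)^2, b^+] = \tfrac12[\{b^+,b^+\},b^+]=0$ using the super-Jacobi identity, or more elementarily because $(b^+)^3$ commutes with $b^+$) and, crucially, a relation expressing $\pi(b^+)\pi(f)$ or $\pi(e)\pi(b^-)$ in terms of lower-degree operators. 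Actually the cleanest route: show $[e,b^-] = -b^+$ (up to a constant depending on normalization) and $[f,b^+] = b^-$ (again up to a constant), which say precisely that $b^+$ maps the $f$-string in $W_0$ onto the $e$-string in $W_1$ and vice versa. From $\pi(e)^k\pi(b^+) = \pi(b^+)\pi(e)^k$ (iterating $[e,b^+]=0$) one gets $\pi(b^+)v_{2k} = \pi(b^+)\pi(e)^k v_0 = \pi(e)^k\pi(b^+)v_0 = \pi(e)^k v_1 = v_{2k+1}$, which is exactly the first identity. The second identity follows symmetrically from $[f,b^-]=0$: $\pi(b^-)v_{-2k} = \pi(b^-)\pi(f)^k v_0 = \pi(f)^k\pi(b^-)v_0 = \pi(f)^k v_{-1} = v_{-2k-1}$.

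So the real work is just verifying the two bracket relations $[\pi(e),\pi(b^+)]=0$ and $[\pi(f),\pi(b^-)]=0$ from the defining relations, after which both displayed equations reduce to pushing $b^\pm$ past powers of $e$ or $f$ and using the definitions $v_{2k} = \pi(e)^k v_0$, $v_{-2k}=\pi(f)^k v_0$, $v_{2k+1}=\pi(e)^k v_1$, $v_{-2k-1}=\pi(f)^k v_{-1}$. I would state these as a short preliminary computation, then present the two chains of equalities. The main obstacle, such as it is, lies in the bracket computations: one must be careful with the grading signs in the super-bracket (for two odd elements the bracket is the anticommutator) and with the normalization constants in the definitions of $e$ and $f$, so that $[e,b^+]$ comes out to be exactly zero rather than a nonzero multiple of something; verifying $[(b^+)^2, b^+] = 0$ via the super-Jacobi identity or directly from \eqref{osp_def_rel} is the crux. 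Everything after that is a formal induction that I would compress to a single line each.
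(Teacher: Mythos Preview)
Your approach is correct and essentially the same as the paper's: both arguments reduce to the observation that $e=\tfrac{1}{2}(b^+)^2$ in the enveloping algebra, so $\pi(e)$ and $\pi(b^+)$ commute and one may push $\pi(b^+)$ past $\pi(e)^k$ (and symmetrically $\pi(b^-)$ past $\pi(f)^k$). The only remark is that you overstate the difficulty of the bracket computation --- no super-Jacobi identity or grading care is needed, since $e=\tfrac12(b^+)^2$ makes $[e,b^+]=0$ an immediate consequence of associativity; the paper exploits this same identity in the form $\pi(b^+)v_1=2\pi(e)v_0$.
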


\begin{proof}
Applying $\pi(b^+)$ to the vector $v_1$ results in a vector of $W_0$ because $\pi(b^+) v_1 = 2\pi(e) v_0$. Thus we find $\pi(b^+) v_2 = \pi(e) v_1 = v_3$. It is clear that this can be generalized to the stated formula for $v_{2k+1}$. The result for $v_{-2k-1}$ can be found analogously.
\end{proof}

Figure \ref{rep1} helps to visualize how the representation space is constructed. We emphasize that the relationship between $v_1$ and $v_{-1}$ is not yet determined.

\vspace{0.7cm}

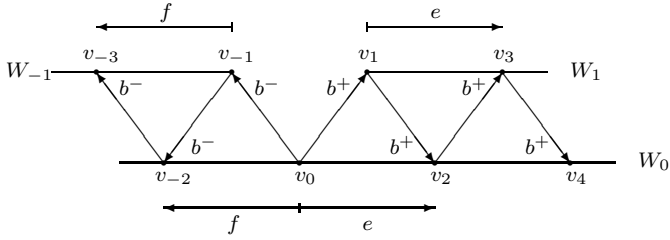
\begin{figure}[ht]
\vspace{0.5cm}
\setlength{\unitlength}{0.60mm}
\begin{center}
  \begin{picture}(15,1)
   \footnotesize{
    \drawline(-55,0)(-15,0)
    \drawline(15,0)(55,0)
    \drawline(-40,-20)(70,-20)
    \put(-65,-1){$W_{-1}$}
    \put(60,-1){$W_1$}
    \put(75,-21){$W_0$}
    \multiput(-30,-20)(30,0){4}{\circle*{1}}
    \put(-32,-24){$v_{-2}$}
    \put(-1,-24){$v_{0}$}
    \put(29,-24){$v_{2}$}
    \put(59,-24){$v_{4}$}
    \multiput(-45,0)(30,0){4}{\circle*{1}}
    \put(-48,3){$v_{-3}$}
    \put(-18,3){$v_{-1}$}
    \put(13,3){$v_{1}$}
    \put(43,3){$v_{3}$}
    \multiput(-30,-20)(30,0){2}{\vector(-3,4){15}}
     \multiput(-40,-5)(30,0){2}{\footnotesize{$b^-$}}
    \multiput(15,0)(30,0){2}{\vector(3,-4){15}}
     \multiput(6,-5)(30,0){2}{\footnotesize{$b^+$}}
    \multiput(0,-20)(30,0){2}{\vector(3,4){15}}
     \multiput(20,-18)(30,0){2}{\footnotesize{$b^+$}}
    \put(-15,0){\vector(-3,-4){15}}
     \put(-24,-18){\footnotesize{$b^-$}}
    \multiput(15,9)(-15,-40){2}{\line(0,1){2}}
    \put(-15,9){\line(0,1){2}}
    \multiput(15,10)(-15,-40){2}{\vector(1,0){30}}
     \multiput(29,12)(-15,-47){2}{$e$}
    \multiput(-15,10)(15,-40){2}{\vector(-1,0){30}}
     \multiput(-31,12)(15,-47){2}{$f$}
   }
  \end{picture}  
\end{center}
\vspace{2.2cm}
\caption{The representation $V$ = $W_{-1} \oplus W_0 \oplus W_1$}
\label{rep1}
\end{figure}
The action of $h$ on the entire representation space $V$ can already be determined.

\begin{lemma}
The action of $h$ on $V$ is given by
\begin{equation}
  \pi(h) v_k = (2\mu+k) v_k,
\end{equation}
for all $k \in \mathbb{Z}$.
\end{lemma}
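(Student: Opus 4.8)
The plan is to reduce the statement to a single commutation identity and then run a routine induction along each of the three strands $W_0$, $W_1$, $W_{-1}$ that make up $V$. First I would extract from the defining relations \eqref{osp_def_rel} the brackets of $h$ with the odd generators: since $2h=\{b^-,b^+\}$, relation \eqref{osp_def_rel} gives at once $[h,b^\pm]=\pm b^\pm$. Together with the relations $[h,e]=2e$ and $[h,f]=-2f$ already recorded in the excerpt, these are the only structural facts needed. The underlying mechanism in every step is the same: $\pi(h)$ is ``pushed past'' $\pi(e)$, $\pi(f)$, $\pi(b^+)$ or $\pi(b^-)$ at the cost of the corresponding commutator term.

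Next I would treat the even strand $W_0$. For $k\ge 0$ we have $v_{2k}=\pi(e)^k v_0$, so induction on $k$ using $[h,e]=2e$ gives
\[
  \pi(h) v_{2k}=\bigl(\pi(e)\pi(h)+2\pi(e)\bigr)v_{2k-2}=(2\mu+2k)v_{2k},
\]
with base case the assumption \eqref{hv0_2muv0}. The identical argument with $[h,f]=-2f$ applied to $v_{-2k}=\pi(f)^k v_0$ yields $\pi(h)v_{-2k}=(2\mu-2k)v_{-2k}$, so the formula $\pi(h)v_k=(2\mu+k)v_k$ is established for all even $k\in\mathbb{Z}$.

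Then I would pin down the two seed vectors of the odd strands. From $v_1=\pi(b^+)v_0$ and $[h,b^+]=b^+$ one computes $\pi(h)v_1=\bigl(\pi(b^+)\pi(h)+\pi(b^+)\bigr)v_0=(2\mu+1)v_1$, and similarly $[h,b^-]=-b^-$ gives $\pi(h)v_{-1}=(2\mu-1)v_{-1}$. Applying the two inductions above once more, now to $v_{2k+1}=\pi(e)^k v_1$ and to $v_{-2k-1}=\pi(f)^k v_{-1}$, propagates these values to $\pi(h)v_{2k+1}=(2\mu+2k+1)v_{2k+1}$ and $\pi(h)v_{-2k-1}=(2\mu-2k-1)v_{-2k-1}$. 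Since the four index families $2k$, $-2k$, $2k+1$, $-2k-1$ with $k\ge 0$ exhaust $\mathbb{Z}$, this completes the proof.

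As for difficulty, there is essentially no obstacle: each step is a one-line use of a commutator relation, and all inductions have the same shape. The only points that require any attention are deriving $[h,b^\pm]=\pm b^\pm$ from \eqref{osp_def_rel} and checking that the four families of indices cover $\mathbb{Z}$ with no gaps and no need for separate treatment of the overlap at $k=0$.
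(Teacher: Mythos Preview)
Your proof is correct and follows essentially the same strategy as the paper: derive $[h,b^\pm]=\pm b^\pm$ from \eqref{osp_def_rel}, then push $\pi(h)$ past the raising/lowering operators by induction. The only cosmetic difference is in the odd strand: the paper uses the previous lemma's identity $v_{2k+1}=\pi(b^+)v_{2k}$ to reduce the odd case to the already-established even case in one step, whereas you work directly from the definitions $v_{2k+1}=\pi(e)^k v_1$ and $v_{-2k-1}=\pi(f)^k v_{-1}$ and rerun the $e$/$f$ induction; both routes are equally short and valid.
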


\begin{proof}
For even values of $k$, this follows just from the relations
\[
  [h,e^k] = 2k e^k, \qquad [h,f^k] = -2k f^k.
\]
For $k=1$, these are the commutation relations $[h,e]=2e$ and $[h,f]=-2f$, and the required identities follow by induction. We then obtain
\[
  \pi(h) \, v_{2k} \, = \, \pi(h) \pi(e)^k \, v_0 \, = \, (2\mu+2k) \, v_{2k}.
\]
For the odd values of $k$, we need $[h,b^\pm] = \pm b^\pm$, which is an instant consequence of equation \eqref{osp_def_rel}. From this, we obtain
\beqas
  \pi(h) \, v_{2k+1} \, = \, \pi(h) \pi(b^+) \, v_{2k} \, = \, (2\mu+2k+1) \, v_{2k+1},
\eeqas
and similarly for $v_{-2k-1}$.
\end{proof}

We would like to determine the actions of $b^+$ and $b^-$ on every vector of $W_0$, $W_{-1}$ and $W_1$. Our method involves defining the action of the Casimir operators on the representation space. We write the respective diagonal actions as
\begin{align*}
  \pi(C) \, v           & = \lambda v      & (\forall \, v \in V), \\
  \pi(\Omega) \, v_{2k} & = -\delta (\delta+1) \, v_{2k} & (\forall \, k \in \mathbb{Z}). 
\end{align*}
We will argue that the choice of $\lambda$ is not independent of $\delta$. It is a nice exercice to show with the help of equation \eqref{osp_def_rel} that
\[
  (b^-b^+ - b^+b^-)^2 = 4 (b^-b^+ - b^+b^-) - 16 \Omega.
\]
This can be used to show that $C^2 = (1-4 \Omega)(2C+4 \Omega)$.
%
%
If we let both sides of this equation act on a vector $v_{2k}$, we get a quadratic equation in $\lambda$. The two possible solutions are
\[
  \lambda_1 = 2 \delta (2 \delta+1) \quad \mbox{ and } \quad
  \lambda_2 = 2 (\delta+1) (2 \delta+1).
\]
We choose $\lambda = \lambda_1$ and remark that the results for the choice $\lambda = \lambda_2$ can be reproduced with the transformation $\delta \to -\delta-1$. \\
In order to be able to determine the actions of $b^+$ and $b^-$ on every vector of $V$, we still need the action of the $\mathfrak{su}(1,1)$ Casimir operator $\Omega$ on $W_{-1}$ and $W_1$.

\begin{lemma} \label{lemma_casimir_odd}
The Casimir operator $\Omega$ acts on $W_{-1}$ and $W_1$ as given by
\begin{equation} \label{casimir_odd}
  \pi(\Omega) v_{2k+1} = - (\delta - \frac{1}{2}) (\delta + \frac{1}{2}) v_{2k+1}, \qquad (k \in \mathbb{Z}).
\end{equation}
As desired, the $\mathfrak{su}(1,1)$-Casimir operator is constant on the subspaces $W_{-1}$ and $W_1$ as well. Moreover, the actions on both subspaces are the same.
\end{lemma}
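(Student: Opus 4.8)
The plan is to express the $\mathfrak{su}(1,1)$-Casimir $\Omega$ acting on an odd vector $v_{2k+1}$ in terms of its action on the adjacent even vectors $v_{2k}$, using the bridge relations established earlier (Lemma on $v_{2k+1}=\pi(b^+)v_{2k}$, and $v_1=\pi(b^+)v_0$ itself). Concretely, I would start from the definition $\pi(\Omega)=-\tfrac14\bigl(4\pi(f)\pi(e)+\pi(h)^2+2\pi(h)\bigr)$, together with the already-known diagonal action $\pi(h)v_k=(2\mu+k)v_k$, and the definition of $C$, namely $C=-4\Omega+\tfrac12(b^-b^+-b^+b^-)$, so that on even vectors $-\tfrac14 C$ reconstructs part of $\Omega$. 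The cleanest route is to compute $\pi(\Omega)\pi(b^+)v_0$ by commuting $\Omega$ past $b^+$: since $\Omega$ is central in $\mathfrak{su}(1,1)$ but $b^+$ is an odd element of $\mathfrak{osp}(1|2)$, $\Omega$ does \emph{not} commute with $b^+$, so I would compute the commutator $[\Omega,b^+]$ explicitly from the relations, express it again in terms of $b^+$ and $b^-$ (it should come out proportional to $b^+$ and to $b^- $ composed with even generators, i.e.\ land back in $W_0$-to-$W_{\pm1}$ maps), and then evaluate on $v_0$ using the eigenvalues $2\mu$ for $h$, $-\delta(\delta+1)$ for $\Omega$, and $\lambda_1=2\delta(2\delta+1)$ for $C$.

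An alternative, and probably shorter, derivation uses the identity $C=-4\Omega+\tfrac12(b^-b^+-b^+b^-)$ restricted to the odd subspace. On $W_1$, $b^-b^+$ and $b^+b^-$ are both diagonal (as noted in the text, apparent from the definition of $C$), so $\Omega$ is diagonal there with eigenvalue $-\tfrac14\bigl(\lambda - \tfrac12(\text{eigenvalue of }b^-b^+-b^+b^-)\bigr)$. Thus the real task reduces to computing the eigenvalue of $b^-b^+-b^+b^-$ on $v_1$ (and on a general $v_{2k+1}$), which one gets from $\{b^-,b^+\}=2h$ together with one of $b^-b^+v_1$ or $b^+b^-v_1$; since $v_1=\pi(b^+)v_0$, $b^+b^-v_1$ involves $b^+ b^- b^+ v_0$, and using $\{b^-,b^+\}=2h$ once more this collapses to an expression in $h$ and $e$ acting on $v_0$, all of whose eigenvalues we know. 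Substituting $\lambda=\lambda_1=2\delta(2\delta+1)$ should then yield exactly $-(\delta-\tfrac12)(\delta+\tfrac12)=-(\delta^2-\tfrac14)$ after simplification. The independence of $k$, and the agreement of the $W_1$ and $W_{-1}$ eigenvalues, follow because $C$ is central (constant $=\lambda$ throughout $V$) and because the $b^-b^+-b^+b^-$ eigenvalue, once computed, turns out to be the same on all odd vectors — this last point I would verify either by a short induction along $W_1$ using $v_{2k+3}=\pi(e)v_{2k+1}$ and $[\Omega,e]=0$, or by noting that $\Omega$ commutes with $e$ and $f$, hence is automatically constant on each irreducible $\mathfrak{su}(1,1)$-piece, so only one value per subspace needs checking.

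The main obstacle I anticipate is bookkeeping in the commutator $[\Omega,b^+]$ (or equivalently in reducing $b^+b^-b^+v_0$): one must be careful with signs coming from $e^*=-f$, with the factor conventions in $e=\tfrac14\{b^+,b^+\}$ and $f=-\tfrac14\{b^-,b^-\}$, and with the fact that these are anticommutators of odd elements, so $\{b^+,b^+\}=2(b^+)^2$. Getting the numerical coefficients right so that the two candidate Casimir eigenvalues $\lambda_1,\lambda_2$ feed through consistently — and checking that the choice $\lambda=\lambda_1$ (rather than $\lambda_2$) is what produces the stated $-(\delta-\tfrac12)(\delta+\tfrac12)$ rather than its image under $\delta\to-\delta-1$ — is the delicate part; everything else is linear algebra with already-diagonalized operators.
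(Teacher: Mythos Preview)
Your first route---commute $\Omega$ past $b^+$ and evaluate on an even vector---is exactly what the paper does, and it is the one that goes through cleanly. The paper packages the computation as a single operator identity in the enveloping algebra,
\[
  4\,\Omega\, b^+ \;=\; b^+\bigl(1-2C-4\Omega\bigr),
\]
obtained from $[b^-,b^+]\,b^+ = 2b^+ - b^+[b^-,b^+]$ together with the definition of $C$ used twice. Applying this to $v_{2k}$ and inserting the known eigenvalues $\pi(\Omega)v_{2k}=-\delta(\delta+1)v_{2k}$ and $\pi(C)v_{2k}=\lambda_1 v_{2k}$ gives the stated value on $v_{2k+1}$; replacing $b^+$ by $b^-$ in the identity handles $W_{-1}$ simultaneously. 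So what you describe as ``bookkeeping in the commutator $[\Omega,b^+]$'' is resolved by this one anticommutator-type relation rather than by expanding $\Omega$ into $h,e,f$.

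Your second route is a genuine alternative, but there is a small slip: $b^+b^-b^+v_0$ does \emph{not} collapse to an expression in $h$ and $e$ acting on $v_0$ alone. Writing $b^-b^+=2h-b^+b^-$ leaves a term $(b^+)^2 b^- v_0 = 2e\,v_{-1}$, and at this stage of the paper $e\,v_{-1}$ is not yet known to be a multiple of $v_1$. The fix is to avoid that computation entirely: read off the eigenvalue of $[b^-,b^+]$ on $v_0$ directly from $C=-4\Omega+\tfrac12[b^-,b^+]$ (it equals $2\lambda_1-8\delta(\delta+1)=-4\delta$), and then transport it to $v_1$ using $[b^-,b^+]\,b^+ = 2b^+ - b^+[b^-,b^+]$, which gives eigenvalue $2-(-4\delta)=2+4\delta$ on $v_1$. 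Feeding this back into $C=-4\Omega+\tfrac12[b^-,b^+]$ yields $\pi(\Omega)v_1=\tfrac14(1-4\delta^2)v_1$, as claimed. Note that this is precisely the relation the paper isolates, so the two routes converge on the same key step; the paper's version just bundles it with $C$ to get the $\Omega$-identity in one stroke.
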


\begin{proof}
To prove equation \eqref{casimir_odd}, we will calculate $\pi(\Omega) v_{2k+1}$ as $\pi(\Omega b^+) v_{2k}$. From \eqref{osp_def_rel} we can immediately derive that
\[
  [b^-,b^+] b^+ = 2b^+ - b^+ [b^-,b^+].
\]
Using this and twice the definition of the Casimir element $C$, we obtain
\[
  4 \Omega b^+ = b^+ (1 -2C -4 \Omega).
\]
The same formula holds if we change $b^+$ into $b^-$ in both sides of the equation. All of the operators on the right hand side can be applied to vectors of $W_0$. So now $\pi(\Omega b^+) v_{2k}$ can be easily calculated, with equation \eqref{casimir_odd} as a result.
\end{proof}

It has now become straightforward to find the actions of $b^+$ and $b^-$ on all the vectors of $V$.

\begin{proposition} \label{b+b-_delta}
The actions of the operators $b^+$ and $b^-$ on the vectors of $V$ are given by
\begin{equation} \label{b+b-_V}
  \begin{aligned}[c]
    \pi(b^-) v_{2k}    & =  \, (\mu + k + \delta) v_{2k-1},    \\
    \pi(b^-) v_{2k+1}  & =  \, 2 (\mu + k - \delta) v_{2k},    \\
    \pi(b^+) v_{-2k}   & =  \, - (\mu - k - \delta) v_{-2k+1}, \\
    \pi(b^+) v_{-2k-1} & =  \, 2 (\mu - k + \delta) v_{-2k}.
  \end{aligned}
\end{equation}
After the choice $\lambda = \lambda_2$ one would find these actions by means of the transformation $\delta \to -\delta-1$.
\end{proposition}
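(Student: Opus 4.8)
The plan is to route everything through the two \emph{diagonal} operators $b^-b^+$ and $b^+b^-$. From the definition of $h$ one has $b^-b^+ + b^+b^- = \{b^-,b^+\} = 2h$, while the definition of $C$ gives $b^-b^+ - b^+b^- = 2C + 8\Omega$; adding and subtracting these yields $b^-b^+ = h + (C+4\Omega)$ and $b^+b^- = h - (C+4\Omega)$. On a basis vector $v_m$, the operator $h$ has eigenvalue $2\mu+m$ (the lemma on the action of $h$), $C$ has eigenvalue $\lambda = \lambda_1 = 2\delta(2\delta+1)$, and $\Omega$ has eigenvalue $-\delta(\delta+1)$ for $m$ even and $-(\delta-\tfrac{1}{2})(\delta+\tfrac{1}{2})$ for $m$ odd (Lemma \ref{lemma_casimir_odd}). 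Substituting, and using the cancellation $2\delta(2\delta+1) - 4\delta(\delta+1) = -2\delta$, one reads off in particular $\pi(b^-b^+)v_{2k} = 2(\mu+k-\delta)v_{2k}$ and $\pi(b^+b^-)v_{-2k} = 2(\mu-k+\delta)v_{-2k}$, and, writing $2k-1 = 2(k-1)+1$ and $-2k+1 = -2(k-1)-1$, also $\pi(b^-b^+)v_{2k-1} = 2(\mu+k+\delta)v_{2k-1}$ and $\pi(b^+b^-)v_{-2k+1} = 2(\mu-k-\delta)v_{-2k+1}$.

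Two of the four formulas now come for free. Since $v_{2k+1} = \pi(b^+)v_{2k}$ and $v_{-2k-1} = \pi(b^-)v_{-2k}$ (the lemma visualised in Figure \ref{rep1}), we get $\pi(b^-)v_{2k+1} = \pi(b^-b^+)v_{2k} = 2(\mu+k-\delta)v_{2k}$ and $\pi(b^+)v_{-2k-1} = \pi(b^+b^-)v_{-2k} = 2(\mu-k+\delta)v_{-2k}$. For the remaining two I would exploit $e = \tfrac{1}{4}\{b^+,b^+\} = \tfrac{1}{2}(b^+)^2$: it gives $\pi(b^+)v_{2k-1} = 2\pi(e)v_{2k-2} = 2v_{2k}$, so $\pi(b^-b^+)v_{2k-1} = \pi(b^-)\bigl(\pi(b^+)v_{2k-1}\bigr) = 2\pi(b^-)v_{2k}$; comparing with the value $2(\mu+k+\delta)v_{2k-1}$ from the previous paragraph gives $\pi(b^-)v_{2k} = (\mu+k+\delta)v_{2k-1}$. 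The mirror computation with $f = -\tfrac{1}{2}(b^-)^2$, which yields $\pi(b^-)v_{-2k+1} = -2v_{-2k}$ and hence $\pi(b^+b^-)v_{-2k+1} = -2\pi(b^+)v_{-2k}$, produces $\pi(b^+)v_{-2k} = -(\mu-k-\delta)v_{-2k+1}$. This establishes \eqref{b+b-_V}.

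What is left is bookkeeping rather than a real obstacle. The coefficients of the very first steps are pinned down by the definitions, not by \eqref{b+b-_V}: $\pi(b^+)v_0 = v_1$ and $\pi(b^-)v_0 = v_{-1}$ have coefficient one, and $\pi(b^+)v_{2k+1} = 2v_{2k+2}$, $\pi(b^-)v_{-2k-1} = -2v_{-2k-2}$ follow at once from $e = \tfrac{1}{2}(b^+)^2$ and $f = -\tfrac{1}{2}(b^-)^2$. Hence the four displayed relations are to be read on the shifted ranges of $k$ on which they carry genuinely new information, and one checks without effort that they are consistent with these boundary values; moreover, whenever a $v_j$ appearing in them vanishes, the relation holds trivially, both sides being zero. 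The closing remark about $\lambda = \lambda_2$ needs no new work either: $\lambda_2 = \lambda_1|_{\delta\mapsto-\delta-1}$, the eigenvalue $-\delta(\delta+1)$ of $\Omega$ is invariant under $\delta\mapsto-\delta-1$, and rerunning Lemma \ref{lemma_casimir_odd} and the argument above with $\lambda_2$ in place of $\lambda_1$ produces every eigenvalue from its $\lambda_1$-counterpart by that substitution, so the actions in \eqref{b+b-_V} transform the same way. The only part that rewards attention is keeping the small cancellations in the eigenvalue computations straight; once $b^-b^+$ and $b^+b^-$ are recognised as $h \pm (C+4\Omega)$, everything else is forced.
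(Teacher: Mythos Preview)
Your argument is correct and is precisely the ``straightforward'' computation the paper alludes to but omits: express $b^-b^+$ and $b^+b^-$ as $h \pm (C+4\Omega)$, plug in the eigenvalues of $h$, $C$ and $\Omega$ supplied by the preceding lemmas, and then strip off one factor of $b^\pm$ using the definitions $v_{2k+1}=\pi(b^+)v_{2k}$, $v_{-2k-1}=\pi(b^-)v_{-2k}$ together with $(b^\pm)^2 = \pm 2\,e/f$. Your bookkeeping remark about the index ranges and the boundary values $\pi(b^+)v_0=v_1$, $\pi(b^-)v_0=v_{-1}$ is also appropriate, since the paper is tacit about this point; and your justification of the $\lambda_2$ statement via the invariance of $-\delta(\delta+1)$ under $\delta\mapsto -\delta-1$ is exactly right.
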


Since the actions of $h$, $e$ and $f$ follow directly from these relations, we have now constructed all representations of $\mathfrak{osp}(1|2)$ generated by a weight vector $v_0$. It remains to investigate irreducibility and the $\ast$-condition.

\subsection{Extension to $\ast$-representations}
Recall that $V$ is the space spanned by all the vectors $v_k$, $k \in \mathbb{Z}$. We introduce a sesquilinear form $\inprod{.,.}: V \rightarrow \mathbb{C}$ such that
\[
  \inprod{\pi(X)v, w} = \inprod{v, \pi(X^\ast) w}
\]
for all $X \in \mathfrak{osp}(1|2)$ and for all $v,w \in V$. We see that $h^\ast = h$ implies that $\inprod{v_k, v_l} = 0$ for $k \neq l$. This means that the set $\mathcal{S} = \{ v_k | k \in \mathbb{Z}, v_k \neq 0 \}$ forms an orthogonal basis for $V$. We denote by $\mathcal{I}$ the index set such that $v_k \in \mathcal{S}$ for all $k \in \mathcal{I}$. \\
The form $\inprod{.,.}$ is defined by putting
\[
  \inprod{v_k, v_l} = a_k \delta_{kl}, \qquad k,l \in \mathcal{I},
\]
with $a_k$ to be determined and $a_0 = 1$. The definition of a $\ast$-representation requires that the representation space is a Hilbert space, so our sesquilinear form needs to be an inner product. Hence, we want $a_k > 0$ for $k \in \mathcal{I}$. From the action of $h$ and from $h^\ast = h$ we obtain
\[
  2\mu = \inprod{\pi(h)v_0, v_0} = \inprod{v_0, \pi(h)v_0} = 2\bar{\mu},
\]
so $\mu$ must be a real number. Similar calculations for the actions of $\Omega$ and $C$ reveal that both $\delta(\delta+1)$ and $\delta(2\delta+1)$ are real. These two conditions together imply that $\delta$ must be real. \\
From the actions of $b^+$ and $b^-$ and from $(b^\pm)^\ast = b^\mp$, we derive
\[
  a_{2k+1} = \inprod{v_{2k+1}, \pi(b^+)v_{2k}} 
           = \inprod{\pi(b^-)v_{2k+1}, v_{2k}} 
           = 2(\mu + k - \delta) a_{2k}.
\]
In the same way we find
\[
  a_{2k} = \frac{1}{2} (\mu + k + \delta) a_{2k-1}.
\]
Some readers might care for a closed expression for the $a_k$. This is given by
\[
  a_k = \frac{1}{2} \, (3 - (-1)^k) \, (\mu - \delta)_{\left\lceil k/2 \right\rceil}
                                       (\mu + \delta + 1)_{\left\lfloor  k/2 \right\rfloor},
\]
where $(x)_k = x(x+1) \cdots (x+k-1)$ is the classical Pochhammer symbol. \\
We wish to determine under which conditions $\inprod{.,.}$ is an inner product. Alternatively put, for which parameter values is $a_k > 0$ for all $k \in \mathcal{I}$? Starting from $a_0=1$ this can be derived inductively using the two previous equations. We find that all $a_k$ can be positive only if $\mu - \delta > 0$ and $\mu + \delta + 1 > 0$. \\
A similar reasoning should yield a positivity condition for the $a_k$ for negative $k$. However, the resulting conditions $\mu \pm \delta + k > 0$ can never be satisfied for all negative values of $k$. Hence, the representation $\pi$ must have a lowest weight vector, because otherwise it would not be possible to define an inner product on the entire representation space. In this case, the restriction of $\pi$ to an $\mathfrak{su}(1,1)$ subspace is known as a positive discrete series representation.

There are two choices for $\delta$ to obtain a lowest weight representation. One choice is to have $v_0$ as a lowest weight vector, which will arise when $\delta = -\mu$ as one sees from the actions \eqref{b+b-_V}. For $\delta = \mu - 1$ we obtain $\pi(b^+) v_{-2}=0$, in which case $v_{-1}$ is the lowest weight vector.
After one of these choices Proposition \ref{b+b-_delta} must obviously be rewritten. Before we do this, let us make use of the inner product $\inprod{.,.}$ to construct an orthonormal basis $\{ e_k \}$:
\[
  e_{2k} = \frac{v_{2k}}{\|v_{2k}\|}        \quad (k \geq 0), \qquad 
  e_{2k} = (-1)^k \frac{v_{2k}}{\|v_{2k}\|} \quad (k < 0),
\]
and
\[
  e_{2k+1} = \frac{v_{2k+1}}{\|v_{2k+1}\|}            \quad (k \geq 0), \qquad 
  e_{2k+1} = (-1)^{k-1} \frac{v_{2k+1}}{\|v_{2k+1}\|} \quad (k < 0),
\]
%
%
for $k \in \mathcal{I}$. We can now investigate all irreducible $\ast$-representations of $\mathfrak{osp}(1|2)$.

\begin{proposition} \label{prop_two_classes}
The only class of irreducible $\ast$-representations of $\mathfrak{osp}(1|2)$ is a direct sum of two positive discrete series representations of $\mathfrak{su}(1,1)$, determined by a parameter $\mu$. For $0 < \mu \leq \frac{1}{2}$, there is only one irreducible $\ast$-representation of $\mathfrak{osp}(1|2)$. The actions of the generators on the basis vectors $\{ e_k | \, k=0,1,2, \ldots \}$ of the representation space are determined by
\begin{equation} \label{final_actions}
  \begin{array}{rcl}
    \pi(b^+) e_{2k}   & = & \sqrt{2(2\mu+k)} \, e_{2k+1}, \\
    \pi(b^-) e_{2k}   & = & \sqrt{2k}        \, e_{2k-1}, \\
    \pi(b^+) e_{2k+1} & = & \sqrt{2(k+1)}    \, e_{2k+2}, \\
    \pi(b^-) e_{2k+1} & = & \sqrt{2(2\mu+k)} \, e_{2k}. 
  \end{array}
\end{equation}
For $\mu > \frac{1}{2}$, this representation can occur alongside another one, for which the actions of the generators on the basis vectors $\{ e_k | \, k=-1,0,1,2, \ldots \}$ are given by
\begin{equation} \label{equiv_actions}
  \begin{array}{rcl}
    \pi(b^+) e_{2k}   & = & \sqrt{2(k+1)}      \, e_{2k+1}, \\
    \pi(b^-) e_{2k}   & = & \sqrt{2(2\mu+k-1)} \, e_{2k-1}, \\
    \pi(b^+) e_{2k+1} & = & \sqrt{2(2\mu+k)}   \, e_{2k+2}, \\
    \pi(b^-) e_{2k+1} & = & \sqrt{2(k+1)}      \, e_{2k}. 
  \end{array}
\end{equation}
The actions of the other generators follow immediately from these relations and are left for the reader to calculate.
\end{proposition}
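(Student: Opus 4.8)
Here is how I would prove Proposition~\ref{prop_two_classes}.

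The plan is to run the construction of the preceding two subsections to its conclusion in each of the two admissible cases, and then to check irreducibility by hand. Recall what has already been forced on us: an irreducible $\ast$-representation containing an $h$-eigenvector must possess a lowest weight vector; $\mu$ and $\delta$ are real; the sesquilinear form is an inner product only if $\mu-\delta>0$ and $\mu+\delta+1>0$; and the lowest-weight requirement leaves exactly the two possibilities $\delta=-\mu$ (with $v_0$ annihilated by $b^-$) and $\delta=\mu-1$ (with $\pi(b^+)v_{-2}=0$, so $v_{-1}$ the lowest weight vector).

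First I would treat $\delta=-\mu$. Then $v_0$ is the lowest weight vector, the entire negative string collapses, $V=\mathrm{span}\{v_k:k\ge 0\}$, and the two positivity inequalities reduce to the single condition $\mu>0$. Inserting $\delta=-\mu$ (so that $\mu-\delta=2\mu$ and $\mu+\delta+1=1$) into the recursions $a_{2k}=\tfrac12(\mu+k+\delta)a_{2k-1}$, $a_{2k+1}=2(\mu+k-\delta)a_{2k}$ — or into the closed form already recorded for $a_k$ — shows that every $a_k$ is positive exactly when $\mu>0$. Passing to the orthonormal basis $e_k=v_k/\sqrt{a_k}$ and substituting the ratios $a_{k\pm1}/a_k$ into \eqref{b+b-_V}, together with $\pi(b^+)v_{2k}=v_{2k+1}$ and $\pi(b^+)v_{2k+1}=2v_{2k+2}$ (which come from Lemma~1 and $e=\tfrac12(b^+)^2$), reproduces \eqref{final_actions} after the obvious cancellations. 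This is the representation available for every $\mu>0$.

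Next I would treat $\delta=\mu-1$, where $v_{-1}$ is the lowest weight vector. For $k\ge 0$ the $a_k$ remain positive for any $\mu>0$, so the only new constraint concerns $a_{-1}$: from $a_0=\tfrac12(\mu+\delta)a_{-1}=\tfrac12(2\mu-1)a_{-1}$ one gets $a_{-1}=2/(2\mu-1)$, positive exactly when $\mu>\tfrac12$. At $\mu=\tfrac12$ one has $\delta=-\mu$, so this case degenerates into the previous one with $v_{-1}=0$; hence for $0<\mu\le\tfrac12$ the first representation is the only one. For $\mu>\tfrac12$, normalising as before and relabelling yields \eqref{equiv_actions}. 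A one-step index shift $e_j\mapsto e_{j-1}$ carries \eqref{equiv_actions} with parameter $\mu$ onto \eqref{final_actions} with parameter $\mu-\tfrac12$, which is the announced identification of equivalent representation classes: the list of irreducible $\ast$-representations is really the single family \eqref{final_actions} indexed by $\mu>0$, two of whose members present themselves when only the value $2\mu$ of an $h$-eigenvalue is prescribed and $\mu>\tfrac12$.

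It remains to check irreducibility, which is uniform in both cases: $h$ acts with the pairwise distinct eigenvalues $2\mu+k$, so an $h$-invariant — hence any — submodule is spanned by a subset of the $e_k$; but $\pi(b^-)$ sends every $e_k$ to a \emph{nonzero} multiple of $e_{k-1}$ unless $e_k$ is the lowest weight vector, and $\pi(b^+)$ sends every $e_k$ to a nonzero multiple of $e_{k+1}$, so a nonzero submodule must contain the lowest weight vector, hence all of $V$. Finally, since $e=\tfrac12(b^+)^2$ and $f=-\tfrac12(b^-)^2$ shift the index by $2$, the spans of the even-indexed and of the odd-indexed basis vectors are each invariant under the even subalgebra and carry positive discrete series representations of $\mathfrak{su}(1,1)$ with lowest $h$-weights differing by $1$; thus $V$ is their direct sum, and by the construction — which began from an $h$-eigenvector and was forced into one of the two cases above — these exhaust the irreducible $\ast$-representations. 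I expect no serious difficulty beyond bookkeeping: the delicate points are keeping the index ranges in \eqref{b+b-_V} straight, handling the coalescence of the two cases at $\mu=\tfrac12$, and making sure that the assertion ``these are all'' genuinely rests on the earlier observation ruling out a bi-infinite weight string.
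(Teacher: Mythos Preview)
Your proof is correct and follows the same two-case split ($\delta=-\mu$ versus $\delta=\mu-1$) as the paper. The paper's own argument is a terse four sentences that only records the parameter constraints ($\mu>0$ for the first case, $\mu>\tfrac12$ for the second so that $\pi(b^+)e_{-1}$ is well defined and nonzero); you supply considerably more detail --- the explicit positivity analysis of $a_{-1}$, the derivation of the normalized actions, a direct irreducibility check, and the decomposition into two positive discrete series --- and you also fold in the index-shift equivalence $e_j\mapsto e_{j-1}$, $\mu\mapsto\mu-\tfrac12$, which the paper separates out as the subsequent Theorem rather than part of this Proposition.
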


\begin{proof}
For $\delta = -\mu$, we get the first representation, which is a lowest weight representation since $\pi(b^-) e_0=0$. It is clear that $\mu$ must be strictly positive so that all the given actions are well defined. The case $\mu=0$ is excluded to be sure that $\pi(b^+) e_{2k}$ differs from zero. \\
In the case of the second representation, for $\delta = \mu - 1$, we must add the condition $\mu > \frac{1}{2}$ to guarantee that $\pi(b^+) e_{-1}$ is well defined and different from zero. We end up with the desired classification.
\end{proof}

Note that if we were to choose $\lambda = \lambda_2$ in the discussion preceding Lemma \ref{lemma_casimir_odd}, we would find exactly the same class of irreducible $\ast$-representations. Indeed, these two representations would pop up for the choices $-\delta-1 = -\mu$ or $-\delta-1 = \mu-1$. It immediately follows that the other actions remain the same in this case. 

Finally, we notice an equivalence between both representation classes in Proposition \ref{prop_two_classes}. Thus, we end up with only one class of irreducible representations of $\mathfrak{osp}(1|2)$.

\begin{theorem}
  The only class of irreducible $\ast$-representations of $\mathfrak{osp}(1|2)$ is a direct sum of two positive discrete series representations of $\mathfrak{su}(1,1)$, determined by a parameter $\mu > 0$. The actions of the generators on the basis vectors $\{ e_k | \, k=0,1,2, \ldots \}$ of the representation space are determined by \eqref{final_actions}.
\end{theorem}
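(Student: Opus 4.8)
The plan is to derive the theorem directly from Proposition~\ref{prop_two_classes} by showing that the second family \eqref{equiv_actions}, available for $\mu>\frac{1}{2}$, contributes nothing beyond the first family \eqref{final_actions}. Since a $\ast$-representation was defined as a representation of $\mathfrak{osp}(1|2)$ regarded as an ordinary associative algebra together with the compatibility \eqref{star_inprod}, two such representations are equivalent as soon as there is a unitary map between the underlying Hilbert spaces that intertwines the action of every algebra element; and because $b^+$ and $b^-$ generate the algebra, it suffices to intertwine $\pi(b^+)$ and $\pi(b^-)$.

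Write $\pi_\mu$ for the representation \eqref{final_actions}, acting on the Hilbert space with orthonormal basis $\{e_k\}_{k\ge 0}$, and $\rho_\mu$ for the representation \eqref{equiv_actions}, acting on the Hilbert space with orthonormal basis $\{u_k\}_{k\ge -1}$. The vector $u_{-1}$ is a lowest weight vector of $\rho_\mu$, since $\pi(b^-)u_{-1}=0$, just as $e_0$ is for $\pi_\mu$; this suggests the index shift $U\colon u_k\mapsto e_{k+1}$, which extends to a unitary operator between the two Hilbert spaces because it carries one orthonormal basis onto another. Substituting the definition of $U$ into the four intertwining identities $U\pi(b^\pm)u_k=\pi(b^\pm)Uu_k$ and comparing coefficients, one checks that they all hold precisely when $\mu'=\mu-\frac{1}{2}$ (the only nontrivial comparison amounting to $2\mu'+1=2\mu$, and arising in two of the four families of relations). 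Hence $U$ intertwines $\rho_\mu$ with $\pi_{\mu-1/2}$. As $\mu$ ranges over $(\frac{1}{2},\infty)$ the parameter $\mu-\frac{1}{2}$ ranges over $(0,\infty)$, so every representation of the second type is equivalent to one of the first. Together with Proposition~\ref{prop_two_classes} this shows that, up to equivalence, every irreducible $\ast$-representation of $\mathfrak{osp}(1|2)$ is $\pi_\mu$ for some $\mu>0$; moreover these are pairwise inequivalent, since a short computation with $h=\frac{1}{2}\{b^-,b^+\}$ gives $\pi(h)e_0=2\mu\,e_0$, and this bottom of the $h$-spectrum (equivalently, the Casimir eigenvalue, which depends on $\mu$) is an invariant of the equivalence class.

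I do not expect a serious obstacle here: the substance is the bookkeeping in the intertwining relations, which becomes routine once the correct index shift $u_k\mapsto e_{k+1}$ and the reparametrization $\mu\mapsto\mu-\frac{1}{2}$ have been guessed. The one point worth flagging is that $U$ interchanges even- and odd-indexed basis vectors; this is legitimate precisely because equivalence is understood here in the sense of modules over the ordinary (ungraded) enveloping algebra, in accordance with the definition used throughout — a grading-preserving notion of equivalence would instead keep the two families of Proposition~\ref{prop_two_classes} formally apart.
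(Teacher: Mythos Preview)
Your proof is correct and follows essentially the same route as the paper: the paper's proof also shifts the index by one (writing $\bar e_k$ for $e_{k-1}$) and observes that \eqref{equiv_actions} becomes \eqref{final_actions} with $\bar\mu=\mu-\frac{1}{2}$. Your version is more explicit about the intertwining unitary and adds two points the paper omits, namely the pairwise inequivalence of the $\pi_\mu$ and the remark that the equivalence is ungraded; both are welcome clarifications.
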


\begin{proof}
  For $\mu > \frac{1}{2}$, define $\bar{e}_k = \bar{e}_{k-1}$ for $k=0,1,2, \ldots$. Then the actions \eqref{equiv_actions} prove to be equivalent to \eqref{final_actions} for $\bar{\mu} = \mu-\frac{1}{2}$. Hence, both representations are equivalent.
\end{proof}

\section{Conclusions and further results}
In this text we have obtained a classification of all irreducible $\ast$-representations of $\mathfrak{osp}(1|2)$. The latter Lie superalgebra showed up naturally in the Wigner quantization of the considered Hamiltonian $H=xp$. Our main concern however, was to investigate the spectrum of the operators $\hat{H}$ and $\hat{x}$. Since these operators are written in terms of generators of $\mathfrak{osp}(1|2)$ we felt the need to explore representations of this Lie superalgebra. They provide us with a suitable framework in which we know how the crucial operators act. 

Results about the spectrum of $\hat{H}$ and $\hat{x}$ have already been found and the details will be published in a subsequent paper, but it is interesting to summarize the results here. \\
In order to find all eigenvalues of one of the operators, one defines a formal eigenvector for a specific eigenvalue $t$,
\[
  v(t) = \sum_{n=0}^\infty \alpha_n(t) e_n,
\]
where the $e_n$ are the eigenvectors of the $\mathfrak{osp}(1|2)$ representation space $V$ and the $\alpha_n(t)$ are unknown coefficients depending on the eigenvalue $t$. Demanding that $v(t)$ is an eigenvector of the operator in question will gives us a three term recurrence relation for the coefficients $\alpha_n(t)$. These coefficients are then identified with the orthogonal polynomials that comply with the same recurrence relation. The spectrum of the operator is then equal to the support of the weight function of this type of orthogonal polynomials. \\
Concretely we have that the spectrum of $\hat{H}$ is related to Meixner-Pollaczek polynomials and is equal to $\mathbb{R}$ with multiplicity two. Generalized Hermite polynomials are connected with the spectrum of $\hat{x}$, which is simply $\mathbb{R}$. \\
Recall that Wigner quantization is a somewhat more general approach than canonical quantization. This means that one should be able to recover the canonical case from the results after Wigner quantization. Indeed, our results prove to be compatible with the well-known canonical case for the representation parameter $\mu=\frac{1}{4}$.

\section*{Acknowledgments}
G.~Regniers was supported by project P6/02 of the Interuniversity Attraction Poles Programme (Belgian State --- Belgian Science Policy)



\begin{thebibliography}{10}


\bibitem{Selberg-1956}
A.~Selberg,
{\it J. Indian Math. Soc.} {\bf 20}, 47-87 (1956).

\bibitem{Montgomery-1973}
H.~Montgomery,
in: Analytic number theory, vol. 24, pp. 181-193 (Providence, RI,
American Mathematical Society, 1973).

\bibitem{Connes-1999}
A.~Connes,
{\it Sel. Math. New ser.} {\bf 5}, 29-106 (1999).

\bibitem{Berry-1999a}
M.V.~Berry and J.P.~Keating,
{\it Sel. Math. New ser.} {\bf 5}, 29-106 (1999).

\bibitem{Berry-1999b}
M.V.~Berry and J.P.~Keating,
{\it SIAM Rev.} {\bf 41}, 236-266 (1999).

\bibitem{Wigner}
E.P.~Wigner, 
{\it Phys. Rev.} {\bf 77}, 711-712 (1950).

\bibitem{LSVdJ-06}
S.~Lievens, N.I.~Stoilova and J.~Van der Jeugt,
{\it J. Math. Phys.} {\bf 47}, 113504 (2006).

\bibitem{LSVdJ-08-1}
S.~Lievens, N.I.~Stoilova and J.~Van der Jeugt,
{\it J. Math. Phys.} {\bf 49}, 073502 (2008).

\bibitem{LVdJ-08}
S.~Lievens and J.~Van der Jeugt,
{\it J. Phys. A: Math. Theor.} {\bf 41}, 355204 (2008).

\bibitem{Hughes-1981}
J.W.B.~Hughes, 
{\it J. Math. Phys.} {\bf 22} (2), 245-250 (1981).

\bibitem{Groenevelt-2004}
W.~Groenevelt, 
{\it Tensor product representations and special functions},
(Delft University Press Science, Delft, 2004).



\end{thebibliography}
\end{document}